\newtheorem{theorem}{Theorem}[section]
\newtheorem{lemma}[theorem]{Lemma}
\newtheorem{proposition}[theorem]{Proposition}
\theoremstyle{definition}
\newtheorem{definition}[theorem]{Definition}
\newtheorem{claim}{Claim}
\theoremstyle{remark}
\newtheorem*{remark}{Remark}
\newcommand{\remove}[1]{}
\newcommand{\pa}[2]{\mathbf{P}_{#1}^{(#2)}}
\newcommand{\cy}[2]{\mathbf{Q}_{#1}^{(#2)}}
\newcommand{\hp}[2]{\mathbf{H}_{#1}^{(#2)}}
\newcommand{\hc}[2]{\mathbf{M}_{#1}^{(#2)}}
\newcommand{\G}{\mathbf{G}}
\newcommand{\eg}{\textit{e.g.}}
\newcommand{\ie}{\textit{i.e.}}
\begin{document}

\title{On the independent subsets of powers of paths and cycles}


\author{Pietro Codara, Ottavio M. D'Antona}




\address[P. Codara, O. M. D'Antona]
{Dipartimento di Informatica, Universit\`{a} degli Studi di Milano,
via Comelico 39/41, I-20135, Milan, Italy}

\email{\{codara,dantona\}@dico.unimi.it}
\thanks{This research was partially supported by ``Dote ricerca'' -- FSE, Regione Lombardia.}

\keywords{Independent subset, path, cycle, power of graph, Fibonacci cube, Lucas cube.}

\subjclass[2010]{Primary: 68R05. Secondary: 11B39.}

\begin{abstract}
In the first part of this work we provide a formula for the number of edges of
the Hasse diagram of the independent subsets of
the $h$\textsuperscript{th} power of a path ordered by
inclusion. For $h=1$ such a value is the number of edges of a Fibonacci
cube. We show that, in general, the number of edges of the diagram
is obtained by convolution of a Fibonacci-like sequence with itself.

In the second part we consider the case of cycles.
We evaluate the number of edges of the Hasse diagram of the independent
subsets of the $h$\textsuperscript{th} power of a cycle ordered by
inclusion. For $h=1$, and $n>1$, such a value is the number of edges of a Lucas
cube.
\end{abstract}

\maketitle

\section{Introduction}
For a graph $\G$ we denote by $V(\G)$ the set of its vertices, and by $E(\G)$
the set of its edges.
\begin{definition}\label{def:h-path}\label{def:h-cycle}
For $n, h\geq0$,
\begin{itemize}
\item[(i)] the \emph{$h$-power of a path}, denoted by $\pa{n}{h}$, is a graph
with $n$ vertices $v_{1}$, $v_{2}$, $\dots$, $v_{n}$ such that, for $1\leq i,j\leq n$, $i\neq j$,
$(v_i,v_j)\in E(\pa{n}{h})$ if and only if $|j-i|\leq h$;
\item[(ii)] the \emph{$h$-power of a cycle}, denoted by $\cy{n}{h}$, is a graph
with $n$ vertices $v_{1}$, $v_{2}$, $\dots$, $v_{n}$ such that, for $1\leq i,j \leq n$, $i\neq j$,
$(v_i,v_j)\in E(\cy{n}{h})$ if and only if $|j-i|\leq h$ or $|j-i|\geq n-h$.
\end{itemize}
\end{definition}
Thus, for instance, $\pa{n}{0}$ and $\cy{n}{0}$ are the graphs made of $n$ isolated nodes,
$\pa{n}{1}$ is the path with $n$ vertices, and  $\cy{n}{1}$ is the cycle with $n$ vertices.
Figures \ref{fig:P_1-6_2}, and \ref{fig:Q_1-6_2} show
some powers of paths and cycles, respectively.

\begin{figure}
 \centering
 \subfigure[The graphs $\pa{1}{2}, \dots, \pa{5}{2}$]
   {\label{fig:P_1-6_2}\includegraphics[scale=0.7]{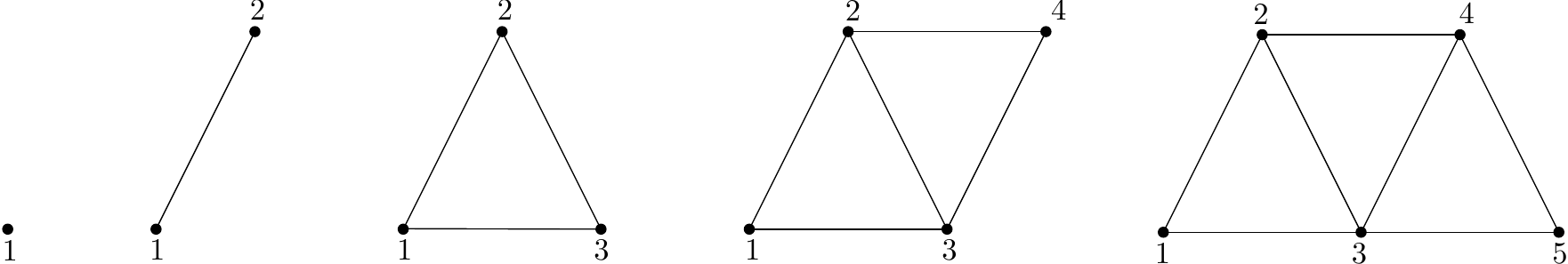}}\\
 \subfigure[The graphs $\cy{1}{2}, \dots, \cy{5}{2}$]
   {\label{fig:Q_1-6_2}\includegraphics[scale=0.75]{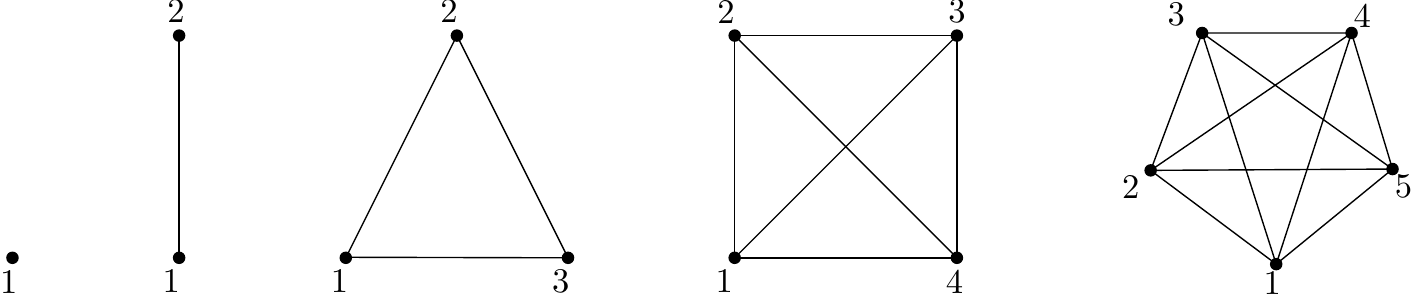}}
 \caption{Some powers of paths and cycles.}
 \end{figure}

\begin{definition}\label{def:indipendent subset}
An \emph{independent subset of a graph $\G$} is a subset of $V(\G)$
not containing adjacent vertices.
\end{definition}

Let $\hp{n}{h}$, and $\hc{n}{h}$ be the Hasse diagrams  of the posets of independent subsets of $\pa{n}{h}$, and $\cy{n}{h}$, respectively,
ordered by inclusion. Clearly, $\hp{n}{0}\cong\hc{n}{0}$ is a Boolean lattice with $n$ atoms ($n$-cube, for short).

\medskip
Every independent subset
$S$ of $\pa{n}{h}$ can be represented by a binary string $b_1 b_2 \cdots b_n$,
where, for $i=1,\dots,n$, $b_i = 1$ if and only if
$v_i \in S$.
Specifically, each
independent subset of $\pa{n}{h}$ is associated with a binary string of length
$n$ such that the distance between any two $1$'s of the string is greater than $h$.
Following \cite{fibocubes_enum} (see also \cite{survey_fibo}),
a \emph{Fibonacci string of order $n$} is a binary strings of length $n$ without
(two) consecutive $1$'s. Recalling that the Hamming distance between two binary strings $\alpha$ and $\beta$
is the number $H(\alpha,\beta)$ of bits where $\alpha$ and $\beta$ differ, we can define
the \emph{Fibonacci cube of order $n$}, denoted $\Gamma_n$, as the graph $(V,E)$, where $V$ is the set of all
Fibonacci strings of order $n$ and, for all $\alpha,\beta\in V$, $(\alpha,\beta)\in E$ if and only if $H(\alpha,\beta)=1$.
One can observe that for $h=1$ the binary strings associated with independent subsets of $\pa{n}{h}$
are \emph{Fibonacci strings of order $n$}, and the Hasse diagram of the set of all such strings ordered
bitwise is $\Gamma_n$. Fibonacci cubes were
introduced as an interconnection scheme for multicomputers in \cite{hsu}, and their
combinatorial structure has been further investigated, \eg\, in \cite{klavzar,fibocubes_enum}.
Several generalizations of the notion of Fibonacci cubes has been proposed
(see, \eg, \cite{gen_fibo,survey_fibo}).

\begin{remark}Consider the \emph{generalized Fibonacci cubes} described in \cite{gen_fibo}, \ie, the graphs $B_n(\alpha)$ obtained from the $n$-cube $B_n$ of all binary strings of length $n$ by removing all vertices that contain the binary string $\alpha$ as a substring. In this notation
the Fibonacci cube is $B_n(11)$. It is not difficult to see that $\hp{n}{h}$ cannot be expressed, in general, in terms of
$B_n(\alpha)$. Instead we have:

\centerline{
$\hp{n}{2}=B_n(11)\cap B_n(101)\,,\
\hp{n}{3}=B_n(11)\cap B_n(101)\cap B_n(1001)\,,\ \dots\,,$}
\noindent where $B_n(\alpha)\cap B_n(\beta)$ is the subgraph of $B_n$ obtaining by removing
all strings that contain either $\alpha$ or $\beta$.
\end{remark}

A similar argument can be carried out in the case of cycles. Indeed,
every independent subset
$S$ of $\cy{n}{h}$ can be represented by a circular binary string (\ie, a sequence of $0$'s and $1$'s with the first and last bits considered to be adjacent) $b_1 b_2 \cdots b_n$,
where, for $i=1,\dots,n$, $b_i = 1$ if and only if
$v_i \in S$. Thus, each
independent subset of $\cy{n}{h}$ is associated with a circular binary string of length
$n$ such that the distance between any two $1$'s of the string is greater than $h$.
A \emph{Lucas cube of order $n$}, denoted $\Lambda_n$, is defined as the graph whose vertices
are the binary strings of length $n$ without either two consecutive $1$'s or a $1$ in the first
and in the last position, and in which the vertices are adjacent when their Hamming distance
is exactly $1$ (see \cite{lucas}). For $h=1$ the Hasse diagram of the set of all circular binary strings
associated with independent subsets of $\cy{n}{h}$ ordered bitwise is $\Lambda_n$.
A generalization of the notion of Lucas cubes has been proposed
in \cite{gen_lucas}.
\begin{remark}Consider the \emph{generalized Lucas cubes} described in \cite{gen_lucas}, that is, the graphs $B_n(\widehat{\alpha})$ obtained from the $n$-cube $B_n$ of all binary strings of length $n$ by removing all vertices that have a \emph{circular containing} $\alpha$ as a substring (\ie, such that $\alpha$ is contained in the circular binary strings obtained by connecting first and last bits of the string). In this notation
the Lucas cube is $B_n(\widehat{11})$. It is not difficult to see that $\hc{n}{h}$ cannot be expressed, in general, in terms of
$B_n(\widehat{\alpha})$. Instead we have:

\smallskip\centerline{
$\hc{n}{2}=B_n(\widehat{11})\cap B_n(\widehat{101})\,,\
\hc{n}{3}=B_n(\widehat{11})\cap B_n(\widehat{101})\cap B_n(\widehat{1001})\,,\ \dots$}
\end{remark}

As far as we now, our $\hp{n}{h}$, and $\hc{n}{h}$ are new generalizations of Fibonacci and Lucas cubes, respectively.

\medskip
In the first part of the paper we evaluate $p_n^{(h)}$, \ie, the number of \emph{independent subsets}
of $\pa{n}{h}$, and $H_n^{(h)}$, \ie, the number of edges of $\hp{n}{h}$.
Our main result (Theorem \ref{th:main}) is that,
for $n,h\geq 0$, the sequence $H_n^{(h)}$ is obtained
by convolving the sequence $\underbrace{1,\dots,1}_{h},p_{0}^{(h)},p_{1}^{(h)},p_{2}^{(h)},\dots$
with itself.

In the second part of the paper we derive similar results for $q_n^{(h)}$, \ie, the number of \emph{independent subsets}
of $\cy{n}{h}$, and $M_n^{(h)}$, \ie, the number of edges of $\hc{n}{h}$.

\section{The independent subsets of powers of paths}\label{sec:ind sub of paths}

For $n, h, k\geq 0$, we denote by $p_{n,k}^{(h)}$ the number of  independent
$k$-subsets of $\pa{n}{h}$.
\begin{remark}
For $h=1$, $p_{n,k}^{(h)}$ counts the number of binary strings $\alpha\in \Gamma_n$
such that $H(\alpha,00\cdots0)=k$.
\end{remark}
\begin{lemma}
\label{lem:p_nk formula}
For $n, h, k \geq 0$,
\[
p_{n,k}^{(h)} = \binom{n-hk+h}{k}\ .
\]
\end{lemma}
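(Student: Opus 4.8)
The plan is to count the independent $k$-subsets of $\pa{n}{h}$ directly, by turning the adjacency condition into a gap condition on the chosen vertex indices and then applying a standard shift bijection. First I would identify an independent $k$-subset $S$ with the increasing sequence of its vertex indices $1 \le i_1 < i_2 < \cdots < i_k \le n$. By Definition~\ref{def:h-path}, two vertices $v_i$ and $v_j$ are adjacent precisely when $|i-j| \le h$, so $S$ is independent if and only if any two chosen indices differ by more than $h$; because the $i_j$ are sorted, it suffices to impose this on consecutive pairs, giving the gap condition $i_{j+1} - i_j \ge h+1$ for $j = 1, \dots, k-1$. This is exactly the ``distance between any two $1$'s greater than $h$'' reformulation already noted in the text.

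The key step is the substitution $y_j := i_j - (j-1)h$ for $j = 1, \dots, k$. Under this change of variables the gap inequality $i_{j+1} - i_j \ge h+1$ becomes the strict-increase condition $y_{j+1} > y_j$, while the endpoint constraints $1 \le i_1$ and $i_k \le n$ become $1 \le y_1$ and $y_k \le n - (k-1)h = n - hk + h$. Thus the map $S \mapsto \{y_1, \dots, y_k\}$ is a bijection between the independent $k$-subsets of $\pa{n}{h}$ and the $k$-element subsets of $\{1, 2, \dots, n - hk + h\}$, with inverse sending $\{y_1, \dots, y_k\}$ (written in increasing order) back to the indices $i_j = y_j + (j-1)h$. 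Counting the subsets in the target set then yields $\binom{n-hk+h}{k}$ at once.

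Finally I would check that the formula behaves correctly in the degenerate cases, which is the only point where any care is needed. For $k=0$ the empty set is the unique independent $0$-subset and $\binom{n+h}{0}=1$; when there is insufficient room, that is when $n - hk + h < k$, the shifted ground set is too small to contain a $k$-subset, so $\binom{n-hk+h}{k}=0$, matching the absence of any such independent subset. I do not expect a genuine combinatorial obstacle here: the whole argument is a packing/gap count, and the only thing requiring verification is that the shift $y_j = i_j - (j-1)h$ is a well-defined bijection all the way to the boundary, in particular that $y_k \le n-(k-1)h$ is both necessary and sufficient. (An alternative route would be induction on $n$ via the recurrence $p_{n,k}^{(h)} = p_{n-1,k}^{(h)} + p_{n-h-1,k-1}^{(h)}$, obtained by conditioning on whether $v_n \in S$, together with a Pascal-type identity for the claimed binomial; but the bijection is cleaner and I would present it as the main proof.)
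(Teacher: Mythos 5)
Your proposal is correct and takes essentially the same route as the paper: the paper's map $f(\{b_{i_1},\dots,b_{i_k}\}) = \{v_{i_1}, v_{i_2+h},\dots, v_{i_k+(k-1)h}\}$ is exactly the inverse of your shift $y_j = i_j - (j-1)h$, so both proofs rest on the identical packing/gap bijection between independent $k$-subsets of $\pa{n}{h}$ and $k$-subsets of an $(n-hk+h)$-element set. The only cosmetic differences are the direction in which the bijection is presented and how the degenerate cases are split off (the paper separates $n-hk-h<0$; you handle $k=0$ and the insufficient-room case $n-hk+h<k$).
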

This result is Theorem 1 of \cite{hoggatt}. Below we write down a different proof.
\begin{proof}
By Definition \ref{def:indipendent subset},
any two elements $v_i,v_j$ of an independent subset of $\pa{n}{h}$
must satisfy $|j-i|>h$. It is straightforward to check that
whenever $n- hk-h<0$, $p_{n,k}^{(h)} = 0 = \binom{n-hk+h}{k}$.
It is also immediate to see that when $n=h=0$ our lemma holds true.

Suppose now $n- hk-h\geq 0$. We can complete the proof of our lemma by establishing a bijection
between independent $k$-subset of $\pa{n}{h}$ and $k$-subsets
of a set with $(n-hk+h)$ elements. Let $\mathscr{K}$ be the set of all $k$-subsets of a set $B = \{b_{1},b_{2},\dots,b_{n-hk+h}\}$, and
$\mathcal{I}_k$ the set of all independent $k$-subsets of $\pa{n}{h}$.
Consider the map $f: \mathscr{K} \to \mathcal{I}_k$ such that, for any $S=\{b_{i_1},b_{i_2},\dots, b_{i_k}\} \in \mathscr{K}$, with $1\leq i_1<i_2<\cdots <i_k \leq n-hk+h$,
\[
f(\{b_{i_1},b_{i_2},\dots, b_{i_j}, \dots, b_{i_k}\}) = \{v_{i_1},v_{i_2 + h},\dots, v_{i_j + (j-1)h}, \dots, v_{i_k + (k-1)h}\}\,.
\]

\begin{claim}\label{claim:1} The map $f$ associates with each $k$-subset
$S=\{b_{i_1}, b_{i_2}, \dots, b_{i_k}\} \in \mathscr{K}$
an independent $k$-subset of $\pa{n}{h}$.
\end{claim}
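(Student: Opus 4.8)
The plan is to verify two things about the image $f(S)$: that it is a genuine $k$-element subset of the vertices of $\pa{n}{h}$ (so that the map is well-defined and injective in the sense of preserving cardinality), and that it is \emph{independent}, i.e. any two of its vertices are non-adjacent. Throughout I fix an arbitrary $S=\{b_{i_1},b_{i_2},\dots,b_{i_k}\}\in\mathscr{K}$ with $1\leq i_1<i_2<\cdots<i_k\leq n-hk+h$, and write the image as $\{v_{i_1},v_{i_2+h},\dots,v_{i_j+(j-1)h},\dots,v_{i_k+(k-1)h}\}$, so that the $j$\textsuperscript{th} chosen vertex carries index $i_j+(j-1)h$.

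First I would check that the indices lie in the admissible range $\{1,\dots,n\}$. The smallest index is $i_1+(1-1)h=i_1\geq 1$, and the largest is $i_k+(k-1)h$; since $i_k\leq n-hk+h$, we get $i_k+(k-1)h\leq n-hk+h+(k-1)h=n$, so every index is at most $n$. Hence all $k$ vertices genuinely belong to $V(\pa{n}{h})$.

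Next I would establish independence. By Definition \ref{def:indipendent subset} and Definition \ref{def:h-path}, two vertices $v_a,v_b$ of $\pa{n}{h}$ are adjacent precisely when $0<|a-b|\leq h$, so independence means any two distinct chosen indices differ by \emph{more} than $h$. Take two chosen vertices, say the $j$\textsuperscript{th} and $\ell$\textsuperscript{th} with $j<\ell$; their index difference is
\[
\bigl(i_\ell+(\ell-1)h\bigr)-\bigl(i_j+(j-1)h\bigr)=(i_\ell-i_j)+(\ell-j)h.
\]
Since $i_1<i_2<\cdots<i_k$ are strictly increasing integers we have $i_\ell-i_j\geq \ell-j\geq 1$, and since $j<\ell$ we have $(\ell-j)h\geq h$. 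Therefore the difference is at least $1+h>h$, which is exactly the independence condition.

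These two observations together prove the claim: the indices are strictly increasing (so $f(S)$ really has $k$ distinct elements), they fit in $\{1,\dots,n\}$, and consecutive-or-not index gaps all exceed $h$. The only point requiring a little care—and hence the part I would write most explicitly—is the gap estimate, where it is essential to combine \emph{both} the strict monotonicity of the $i_j$ (contributing $i_\ell-i_j\geq\ell-j$) and the built-in shift $(\ell-j)h$; neither alone forces the gap past $h$, but their sum does. The upper-range check is routine once one notices the telescoping identity $n-hk+h+(k-1)h=n$.
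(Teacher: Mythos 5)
Your proof is correct and follows essentially the same route as the paper's: the key step in both is the gap estimate $\bigl(i_\ell+(\ell-1)h\bigr)-\bigl(i_j+(j-1)h\bigr)=(i_\ell-i_j)+(\ell-j)h>h$, combining strict monotonicity of the $i_j$ with the built-in shift. You are in fact slightly more thorough than the paper, which merely asserts that $f(S)\subseteq V(\pa{n}{h})$, whereas you verify explicitly via $i_k+(k-1)h\leq n-hk+h+(k-1)h=n$ that all indices land in range.
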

To see this we first remark that $f(S)$ is a $k$-subset of $V(\pa{n}{h})$.
Furthermore, for each pair $b_{i_j},b_{i_{j+t}}\in S$, with $t>0$, we have
$$i_{j+t} + (j+t-1)h - (i_j + (j-1)h) = i_{j+t} - i_j + th > h\,.$$
Hence, by Definition \ref{def:h-path}, $(f(b_{i_j}),f(b_{i_{j+t}}))=(v_{i_j + (j-1)h},v_{i_{j+t} + (j+t-1)h})\notin E(\pa{n}{h})$. Thus,
$f(S)$ is an independent subset of $\pa{n}{h}$.

\begin{claim}\label{claim:2} The map $f$ is bijective.
\end{claim}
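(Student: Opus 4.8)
The plan is to exhibit an explicit two-sided inverse of $f$ rather than to appeal to a cardinality count, since the size of $\mathcal{I}_k$ is exactly what Lemma \ref{lem:p_nk formula} is trying to determine (so we cannot assume $|\mathscr{K}|=|\mathcal{I}_k|$ in advance and deduce bijectivity from injectivity alone).

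First I would define a candidate inverse $g\colon \mathcal{I}_k \to \mathscr{K}$ by undoing the shift built into $f$. Given an independent $k$-subset $T = \{v_{j_1}, v_{j_2}, \dots, v_{j_k}\}$ of $\pa{n}{h}$ with $j_1 < j_2 < \cdots < j_k$, set
\[
g(\{v_{j_1}, \dots, v_{j_m}, \dots, v_{j_k}\}) = \{b_{j_1}, b_{j_2 - h}, \dots, b_{j_m - (m-1)h}, \dots, b_{j_k - (k-1)h}\}\,.
\]

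The first real step is to check that $g$ is well defined, \ie\ that $g(T)$ genuinely lies in $\mathscr{K}$. Two things must be verified: the indices $j_m - (m-1)h$ are strictly increasing, and they all lie in $\{1, \dots, n - hk + h\}$. For the former, independence of $T$ gives $j_{m+1} - j_m > h$, whence $(j_{m+1} - mh) - (j_m - (m-1)h) = j_{m+1} - j_m - h > 0$, so in particular the indices are distinct and $g(T)$ is a genuine $k$-subset. For the bounds, $j_1 \ge 1$ handles the lower end, and $j_k \le n$ gives $j_k - (k-1)h \le n - (k-1)h = n - hk + h$ for the upper end. This boundary computation — confirming that the largest back-shifted index does not exceed $n - hk + h$ — is the step I expect to be the crux, since it is precisely where the size $n - hk + h$ of $B$ is forced.

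It then remains to verify that $f$ and $g$ are mutually inverse. For $g \circ f = \mathrm{id}$, start from $S = \{b_{i_1}, \dots, b_{i_k}\}$ with $i_1 < \cdots < i_k$; the indices $i_m + (m-1)h$ appearing in $f(S)$ are strictly increasing (as already used in the proof of Claim \ref{claim:1}, since $i_m$ and $(m-1)h$ both grow with $m$), so the $m$-th smallest vertex of $f(S)$ is $v_{i_m + (m-1)h}$, and applying $g$ subtracts $(m-1)h$ from each index, returning $b_{i_m}$ and hence recovering $S$. The reverse composition $f \circ g = \mathrm{id}$ is symmetric: the back-shifted indices produced by $g$ are already sorted by the increasing-index check above, so feeding them through $f$ re-adds $(m-1)h$ and restores each $v_{j_m}$. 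Exhibiting a genuine two-sided inverse establishes that $f$ is a bijection, completing the proof of Claim \ref{claim:2} and, with it, of Lemma \ref{lem:p_nk formula}, yielding $p_{n,k}^{(h)} = |\mathcal{I}_k| = |\mathscr{K}| = \binom{n-hk+h}{k}$.
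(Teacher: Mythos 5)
Your proof is correct, and it is built on the same key construction as the paper's: the back-shift map sending $v_{j_m}$ to $b_{j_m-(m-1)h}$, which the paper denotes $f^{-1}$ and you call $g$. The difference lies in how bijectivity is concluded. The paper notes that $f$ is injective, then checks (``following the same steps as for $f$'') that the back-shift map $\mathcal{I}_k\to\mathscr{K}$ is well defined and injective, and from this concludes that $f$ is surjective; this is a double-injection argument, implicitly resting on finiteness (two injections in opposite directions force $|\mathscr{K}|=|\mathcal{I}_k|$, and an injection between finite sets of equal size is onto). You instead verify that $g\circ f$ and $f\circ g$ are both identities, so $g$ is an explicit two-sided inverse; this removes any appeal to finiteness and concentrates the content in the well-definedness check, in particular the boundary estimate $j_k-(k-1)h\le n-hk+h$, which you correctly identify as the step where the size of $B$ is forced. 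One small correction to your preamble: the cardinality-style argument you set out to avoid is not actually circular in the paper's form, since the paper never assumes $|\mathscr{K}|=|\mathcal{I}_k|$ in advance but derives it from the two injections. Both routes are sound; yours is more self-contained and fills in details (monotonicity of the shifted indices, the range bounds, the two compositions) that the paper delegates to the reader, while the paper's is terser at the cost of leaving the finiteness step and the well-definedness of $f^{-1}$ implicit.
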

It is easy to see that $f$ is injective.
Then, we consider the map $f^{-1}: \mathcal{I}_k \to \mathscr{K}$ such that, for any $S=\{v_{i_1},v_{i_2},\dots, v_{i_k}\} \in \mathcal{I}$, with $1\leq i_1<i_2<\cdots <i_k \leq n$,
\[
f^{-1}(\{v_{i_1},v_{i_2},\dots, v_{i_j}, \dots, v_{i_k}\}) = \{b_{i_1},b_{i_2 - h},\dots, b_{i_j - (j-1)h}, \dots, b_{i_k - (k-1)h}\}\,.
\]
Following the same steps as for $f$, one checks that $f^{-1}$
is injective. Thus, $f$ is surjective.

\medskip
By Claims \ref{claim:1} and \ref{claim:2} we have established a bijection
between independent $k$-subsets of $\pa{n}{h}$ and $k$-subsets of a set with
$(n-hk+h) \geq 0$ elements. The lemma is proved.
\end{proof}

\noindent The coefficients $p_{n,k}^{(h)}$ also enjoy the following property:
$p_{n,k}^{(h)} = p_{n-k+1,k}^{(h-1)}$.

\smallskip
For $n,h\geq0$, the number of  all independent subsets of $\pa{n}{h}$ is
\[
p_n^{(h)}=\sum_{k\geq0}p_{n,k}^{(h)}=\sum_{k=0}^{\lceil n/(h+1)\rceil}p_{n,k}^{(h)}=\sum_{k=0}^{\lceil n/(h+1)\rceil}\binom{n-hk+h}{k}\,.
\]

\begin{remark} Denote by $F_n$ the $n^{th}$ element of the Fibonacci sequence $F_1=1$, $F_2=1$, and $F_i=F_{i-1}+F_{i-2}$, for $i>2$. Then, $p_n^{(1)}=F_{n+2}$ is the number of elements of the Fibonacci cube of order $n$.
\end{remark}

The following, simple fact is crucial for our work.

\begin{lemma}
\label{lem:p_n recurrence}
For $n, h \geq 0$,
\[
p_n^{(h)} = \begin{cases}
n+1 & \text{if }\ n \leq h+1\,, \\
p_{n-1}^{(h)}+p_{n-h-1}^{(h)} & \text{if }\ n>h+1\,.
\end{cases}
\]
\end{lemma}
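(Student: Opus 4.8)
The plan is to prove the two cases separately, handling the base case by a direct count and the recurrence by a case split on the last vertex.

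First I would dispose of the base case $n \leq h+1$. The key observation is that here $\pa{n}{h}$ is a complete graph: by Definition \ref{def:h-path}, $(v_i,v_j)\in E(\pa{n}{h})$ whenever $|j-i|\leq h$, and since $n\leq h+1$ forces $|j-i|\leq n-1\leq h$ for every pair of distinct indices, all $\binom{n}{2}$ pairs are edges. A complete graph admits only the empty set and its $n$ singletons as independent subsets, so $p_n^{(h)}=n+1$; this also covers the degenerate value $n=0$.

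For the recurrence I would assume $n>h+1$ and partition the independent subsets $S$ of $\pa{n}{h}$ according to whether the last vertex $v_n$ belongs to $S$. If $v_n\notin S$, then $S$ is an independent subset of the subgraph induced on $\{v_1,\dots,v_{n-1}\}$, which is exactly $\pa{n-1}{h}$, so these are counted by $p_{n-1}^{(h)}$. If $v_n\in S$, then by Definition \ref{def:h-path} none of $v_{n-1},\dots,v_{n-h}$ may lie in $S$ (each is within distance $h$ of $v_n$), so $S\setminus\{v_n\}$ is an independent subset of the graph induced on $\{v_1,\dots,v_{n-h-1}\}$, namely $\pa{n-h-1}{h}$; since deleting and restoring $v_n$ is a bijection, these are counted by $p_{n-h-1}^{(h)}$. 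Summing the two disjoint cases yields $p_n^{(h)}=p_{n-1}^{(h)}+p_{n-h-1}^{(h)}$, and $n>h+1$ guarantees $n-h-1\geq 1\geq 0$, so the term $p_{n-h-1}^{(h)}$ is well-defined. The only step needing care, and the main (minor) obstacle, is confirming that the subgraph induced on an initial segment $\{v_1,\dots,v_m\}$ of $\pa{n}{h}$ coincides with $\pa{m}{h}$; this is immediate from Definition \ref{def:h-path}, since adjacency depends solely on $|j-i|$ and the vertex labels are preserved.

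Alternatively, one could derive the recurrence purely algebraically from Lemma \ref{lem:p_nk formula}: writing $p_n^{(h)}=\sum_{k\geq 0}\binom{n-hk+h}{k}$ and applying Pascal's rule $\binom{n-hk+h}{k}=\binom{n-hk+h-1}{k}+\binom{n-hk+h-1}{k-1}$ splits the sum into $p_{n-1}^{(h)}$ and, after shifting the summation index in the second piece, into $p_{n-h-1}^{(h)}$. I would favour the combinatorial argument above, since it is more transparent and exhibits the structural meaning of the recurrence directly.
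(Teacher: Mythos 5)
Your proposal is correct and follows essentially the same route as the paper's proof: the base case rests on the observation that for $n\leq h+1$ independent subsets have at most one element (your "complete graph" phrasing is the same fact), and the recurrence comes from the identical partition of independent subsets according to whether $v_n$ belongs to them, with $\mathcal{I}_{out}$ matched to $\pa{n-1}{h}$ and $\mathcal{I}_{in}$ matched to $\pa{n-h-1}{h}$. Your write-up is in fact slightly more careful than the paper's (you verify that induced initial segments of $\pa{n}{h}$ are again powers of paths, and you note the Pascal-rule alternative), but the underlying argument is the same.
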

A proof of this Lemma can be also obtained using the first part
of \cite[Proof of Theorem 1]{hoggatt}.
\begin{proof}
For $n \leq h+1$, by Definition \ref{def:indipendent subset},
the independent subsets of $\pa{n}{h}$ have no more than $1$ element.
Thus, there are $n+1$ independent subsets of $\pa{n}{h}$.

\smallskip
Consider the case $n > h+1$.
Let $\mathcal{I}$ be
the set of all independent subsets of $\pa{n}{h}$, let
$\mathcal{I}_{in}$ be the set of the independent subsets of $\pa{n}{h}$
that contain $v_n$, and let $\mathcal{I}_{out}=\mathcal{I}\setminus\mathcal{I}_{in}$.
The elements
of $\mathcal{I}_{out}$ are in one-to-one correspondence with the
$p_{n-1}^{(h)}$ independent subsets of $\pa{n-1}{h}$, and
those of $\mathcal{I}_{in}$ are in one-to-one correspondence with the
$p_{n-h-1}^{(h)}$ independent subsets of $\pa{n-h-1}{h}$.
\end{proof}

\section{The poset of independent subsets of powers of paths}
\label{sec:hasse of paths}
Figure \ref{fig:P_3-4_h} shows a few Hasse diagrams $\hp{n}{h}$. Notice that,
as stated in the introduction, for each $n$, $\hp{n}{1}$ is the Fibonacci cube $\Gamma_n$.

\begin{figure}[h!]
  \centerline{\includegraphics[scale=0.6]{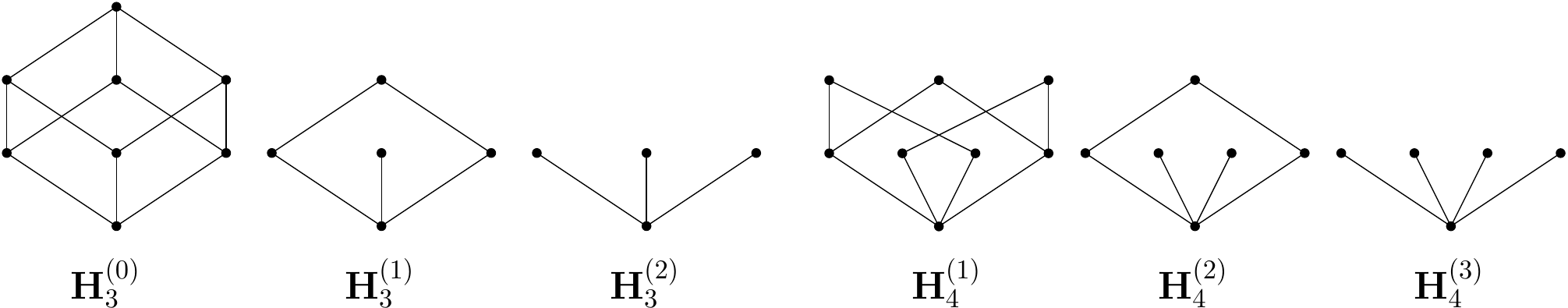}}
  \caption[Some $\hp{n}{h}$]
    {Some $\hp{n}{h}$.}
  \label{fig:P_3-4_h}
\end{figure}

Let $H_n^{(h)}$ be the number of edges of $\hp{n}{h}$.
Noting that in $\hp{n}{h}$ each non-empty independent $k$-subset covers
exactly $k$ independent $(k-1)$-subsets, we can write

\begin{equation}\label{eq:H_nh formula}
H_n^{(h)}= \sum_{k=1}^{\lceil n/(h+1)\rceil}kp_{n,k}^{(h)}= \sum_{k=1}^{\lceil n/(h+1)\rceil}k{{n-hk+h}\choose k}\ .
\end{equation}

\begin{remark}
For $h=1$, $H_n^{(h)}$ counts the number of edges of $\Gamma_n$.
\end{remark}

Let now $T_{k,i}^{(n,h)}$ be the number of independent $k$-subsets of $\pa{n}{h}$ containing the vertex $v_i$,
and let, for $h,k\geq 0$, $n \in \mathbb{Z}$,
$\bar{p}_{n,k}^{(h)} = \begin{cases}
p_{0,k}^{(h)} & \text{if }\ n < 0\,, \\
p_{n,k}^{(h)} & \text{if }\ n \geq 0\,.
\end{cases}$

\begin{lemma}
\label{lem:T formula}
For $n,h,k \geq 0$, and $1\leq i \leq n$,
\[
T_{k,i}^{(n,h)}=\sum_{r=0}^{k-1}\, \bar{p}_{i-h-1,r}^{(h)} \ \bar{p}_{n-i-h,k-1-r}^{(h)}\,.
\]
\end{lemma}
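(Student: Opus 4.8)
The plan is to condition on the forced local structure around $v_i$. If an independent $k$-subset $S$ of $\pa{n}{h}$ contains $v_i$, then by Definitions \ref{def:h-path} and \ref{def:indipendent subset} none of the vertices $v_{i-h},\dots,v_{i-1},v_{i+1},\dots,v_{i+h}$ can lie in $S$. Hence $S\setminus\{v_i\}$ is an independent $(k-1)$-subset drawn entirely from the two blocks $L=\{v_1,\dots,v_{i-h-1}\}$ and $R=\{v_{i+h+1},\dots,v_n\}$. First I would record that the subgraph induced on $L$ is a copy of $\pa{i-h-1}{h}$ (it has $i-h-1$ vertices) and the subgraph induced on $R$ is a copy of $\pa{n-i-h}{h}$ (it has $n-i-h$ vertices), since adjacency inside each block is governed by the same index-distance condition.

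The key observation is that the two blocks do not interact. The smallest index in $R$ exceeds the largest index in $L$ by $(i+h+1)-(i-h-1)=2h+2>h$, so by Definition \ref{def:h-path} every vertex of $L$ is non-adjacent to every vertex of $R$. Consequently an independent $(k-1)$-subset of $L\cup R$ is precisely a disjoint union of an independent $r$-subset of $L$ and an independent $(k-1-r)$-subset of $R$, with no coupling constraint between the two choices. Summing the product of the two counts over all splits $r+(k-1-r)=k-1$ then gives $T_{k,i}^{(n,h)}=\sum_{r=0}^{k-1}p_{i-h-1,r}^{(h)}\,p_{n-i-h,k-1-r}^{(h)}$, which is the claimed convolution once the degenerate blocks are handled.

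The main obstacle, and really the only subtle point, is the bookkeeping when a block has non-positive size, i.e.\ when $i-h-1\leq0$ or $n-i-h\leq0$. In that case the corresponding block is empty; an empty graph has exactly one independent subset (the empty set), so it must contribute a factor $1$ when assigned zero vertices and $0$ otherwise. This is exactly what the clamped quantity $\bar{p}$ records, since $p_{0,r}^{(h)}=1$ for $r=0$ and $p_{0,r}^{(h)}=0$ for $r\geq1$. I would therefore replace $p$ by $\bar{p}$ in the sum, forcing $r=0$ (respectively $k-1-r=0$) whenever the left (respectively right) block degenerates, which also avoids having to read $\binom{n-hk+h}{k}$ at a negative first argument. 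Verifying this degenerate bookkeeping uniformly for all $1\leq i\leq n$ is the part that needs care; the factorization itself is immediate from the non-interaction of $L$ and $R$, so once the boundary cases are settled the stated identity follows.
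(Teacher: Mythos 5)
Your proof is correct and follows essentially the same route as the paper's: delete the forced-empty window $v_{i-h},\dots,v_{i+h}$ around $v_i$, observe that what remains of the subset splits into independent subsets of the two residual sub-paths $\pa{i-h-1}{h}$ and $\pa{n-i-h}{h}$, and sum the product of counts over all splits $r+s=k-1$. If anything, you are more explicit than the paper on two points it leaves implicit --- the non-interaction of the two blocks (index gap $2h+2>h$) and the role of $\bar{p}$ in the degenerate cases $i-h-1\leq 0$ or $n-i-h\leq 0$.
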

\begin{proof}
No independent subset of $\pa{n}{h}$ containing $v_i$ contains any of
the elements $v_{i-h}, \dots,v_{i-1},v_{i+1},\dots,v_{i+h}$.
Let $r$ and $s$ be non-negative integers whose sum is $k-1$. Each independent $k$-subset of $\pa{n}{h}$ containing $v_i$
can be obtained by adding $v_i$ to a $(k-1)$-subset $R\cup S$ such that

\noindent(a) $R\subseteq \{v_1,\dots,v_{i-h-1}\}$ is an independent $r$-subset of $\pa{n}{h}$;

\noindent(b) $S\subseteq \{v_{i+h+1},\dots,v_n\}$ is an independent $s$-subset of $\pa{n}{h}$.

\smallskip
Viceversa, one can obtain each of this pairs of subsets by removing $v_i$ from an independent
$k$-subset of $\pa{n}{h}$ containing $v_i$.
Thus, $T_{k,i}^{(n,h)}$ is obtained by counting independently
the subsets of type (a) and (b). Noting that the subsets of type (b) are
in bijection with the independent $s$-subsets of
$\pa{n-i-h}{h}$,
the lemma is proved.
\end{proof}

\begin{remark}
$T_{k,i}^{(n,1)}$ counts the number of strings $\alpha=b_1 b_2 \cdots b_n\in\Gamma_n$ such that:
(i) $H(\alpha,00\cdots0)=k$, and (ii) $b_i=1$.
\end{remark}

In order to obtain our main result, we prepare a lemma.

\begin{lemma}
\label{lem:somma tabelle T}
For positive $n$,
\[
\sum_{k=1}^{\lceil n/(h+1)\rceil}\sum_{i=1}^n T_{k,i}^{(n,h)} = H_n^{(h)}\,.
\]
\end{lemma}
\begin{proof}
The inner sum
counts the number of $k$-subsets exactly $k$ times, one for each element of the subset.
That is,
$\sum_{i=1}^n T_{k,i}^{(n,h)} = k p_{n,k}^{(h)}$.
The lemma follows directly from Equation (\ref{eq:H_nh formula}).
\end{proof}

Next we introduce a family
of Fibonacci-like sequences.

\begin{definition}\label{def:gen fibonacci}
For $h\geq 0$, and $n\geq 1$, we define the \emph{$h$-Fibonacci sequence}
$\mathcal{F}^{(h)}=\{F_n^{(h)}\}_{n\geq 1}$ whose elements are
\[
F_n^{(h)} = \begin{cases}
1 & \text{if }\ n \leq h+1\,, \\
F_{n-1}^{(h)}+F_{n-h-1}^{(h)} & \text{if }\ n>h+1.
\end{cases}
\]
\end{definition}

From Lemma \ref{lem:p_n recurrence}, and setting
for $h\geq 0$, and $n \in \mathbb{Z}$,
$\bar{p}_{n}^{(h)} = \begin{cases}
p_{0}^{(h)} & \text{if }\ n < 0\,, \\
p_{n}^{(h)} & \text{if }\ n \geq 0\,,
\end{cases}$
we have that,
\begin{equation}\label{eq:F and p}
F_i^{(h)}=\bar{p}_{i-h-1}^{(h)}\,, \ \ \text{for each}\ \  i\geq 1\,.
\end{equation}

\smallskip
Thus, our Fibonacci-like sequences are obtained by adding a prefix of $h$
ones to the sequence $p_{0}^{(h)},p_{1}^{(h)},\dots\ $. Therefore, we have:
\begin{itemize}
\item $\mathcal{F}^{(0)}=1,2,4,\dots,2^{n},\dots$;
\item $\mathcal{F}^{(1)}$ is the Fibonacci sequence;
\item more generally, $\mathcal{F}^{(h)}=\underbrace{1,\dots,1}_{h},p_{0}^{(h)},p_{1}^{(h)},p_{2}^{(h)},\dots$.
\end{itemize}

In the following, we use the discrete convolution operation $\ast$, as follows.
\begin{equation}\label{eq:convolution}
\left(\mathcal{F}^{(h)}\ast \mathcal{F}^{(h)}\right)(n)\doteq \sum_{i=1}^n F_{i}^{(h)} F_{n-i+1}^{(h)}
\end{equation}

\begin{theorem}
\label{th:main}
For $n,h\geq 0$, the following holds.
\[
H_n^{(h)} = \left(\mathcal{F}^{(h)}\ast \mathcal{F}^{(h)}\right)(n)\,.
\]
\end{theorem}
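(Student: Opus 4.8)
The plan is to combine the two preceding $T$-lemmas and then carry out a single reindexing of the resulting double sum. First I would dispose of the degenerate case $n=0$: here $\pa{0}{h}$ has only the empty independent subset, so $H_0^{(h)}=0$, while the right-hand side is the empty sum in \eqref{eq:convolution}, also $0$. Hence I may assume $n\geq1$, so that Lemma \ref{lem:somma tabelle T} applies.

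Starting from Lemma \ref{lem:somma tabelle T} I write $H_n^{(h)}=\sum_{i=1}^n\sum_{k\geq1}T_{k,i}^{(n,h)}$, with the sum over $i$ outermost, and substitute the expression of Lemma \ref{lem:T formula} into each inner term:
\[
\sum_{k\geq1}T_{k,i}^{(n,h)}=\sum_{k\geq1}\sum_{r=0}^{k-1}\bar{p}_{i-h-1,r}^{(h)}\,\bar{p}_{n-i-h,k-1-r}^{(h)}.
\]
The crucial observation is that, setting $s=k-1-r$, the index set $\{(r,s):k\geq1,\ 0\leq r\leq k-1\}$ is precisely the set of all pairs of non-negative integers, so the double sum factors:
\[
\sum_{k\geq1}T_{k,i}^{(n,h)}=\Bigl(\sum_{r\geq0}\bar{p}_{i-h-1,r}^{(h)}\Bigr)\Bigl(\sum_{s\geq0}\bar{p}_{n-i-h,s}^{(h)}\Bigr)=\bar{p}_{i-h-1}^{(h)}\,\bar{p}_{n-i-h}^{(h)},
\]
the last equality being immediate from the barred definitions together with $p_m^{(h)}=\sum_k p_{m,k}^{(h)}$. (Extending the $k$-sum from $\lceil n/(h+1)\rceil$ to all $k\geq1$ is harmless, since the extra terms vanish.)

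It then remains only to recast the two factors through \eqref{eq:F and p}. Since $F_i^{(h)}=\bar{p}_{i-h-1}^{(h)}$, the first factor is $F_i^{(h)}$; and since $F_{n-i+1}^{(h)}=\bar{p}_{(n-i+1)-h-1}^{(h)}=\bar{p}_{n-i-h}^{(h)}$, the second is $F_{n-i+1}^{(h)}$. Summing over $i$ yields $H_n^{(h)}=\sum_{i=1}^n F_i^{(h)}F_{n-i+1}^{(h)}$, which is exactly $(\mathcal{F}^{(h)}\ast\mathcal{F}^{(h)})(n)$ by \eqref{eq:convolution}. I expect the only subtle point to be the factorization step, where one must trust the barred quantities $\bar{p}$ to absorb the boundary cases — small $i$ forcing $i-h-1<0$, or large $i$ forcing $n-i-h<0$ — which is precisely the purpose for which they were introduced, so that no separate boundary analysis is required.
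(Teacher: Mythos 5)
Your proof is correct and follows essentially the same route as the paper: both start from Lemma \ref{lem:somma tabelle T}, reduce the inner sum $\sum_k T_{k,i}^{(n,h)}$ to the product $\bar{p}_{i-h-1}^{(h)}\,\bar{p}_{n-i-h}^{(h)}$, and conclude via Equation (\ref{eq:F and p}) and the convolution definition (\ref{eq:convolution}). The only difference is cosmetic: you obtain that product algebraically, by reindexing the double sum coming from Lemma \ref{lem:T formula}, whereas the paper re-derives it by a direct combinatorial count of the independent subsets containing $v_i$; your explicit handling of $n=0$ (which the paper glosses over, since Lemma \ref{lem:somma tabelle T} is stated only for positive $n$) is a welcome extra.
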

\begin{proof}
The sum $\sum_{k=1}^{\lceil n/(h+1)\rceil} T_{k,i}^{(n,h)}$
counts the number of independent subsets of $\pa{n}{k}$ containing $v_i$.
We can also obtain such a value by counting
the independent subsets of both $\{v_1, \dots, v_{i-h-1}\}$, and $\{v_{i+h+1}, \dots, v_{n}\}$.
Thus,  we have:
\[
\sum_{k=1}^{\lceil n/(h+1)\rceil} T_{k,i}^{(n,h)} = \bar{p}_{i-h-1}^{(h)}\, \bar{p}_{n-h-i}^{(h)}\,.
\]
Using Lemma \ref{lem:somma tabelle T} we can write
\[
H_n^{(h)} = \sum_{k=1}^{\lceil n/(h+1)\rceil}\sum_{i=1}^n T_{k,i}^{(n,h)} =
\sum_{i=1}^n\sum_{k=1}^{\lceil n/(h+1)\rceil} T_{k,i}^{(n,h)} =
\sum_{i=1}^n \bar{p}_{i-h-1}^{(h)}\, \bar{p}_{n-h-i}^{(h)}.
\]
By Equation (\ref{eq:F and p}) we have
$\sum_{i=1}^n \bar{p}_{i-h-1}^{(h)}\, \bar{p}_{n-h-i}^{(h)}=\sum_{i=1}^n F_{i}^{(h)} F_{n-i+1}^{(h)}\,.$
By (\ref{eq:convolution}), the theorem is proved.
\end{proof}

\begin{remark} For $h=1$, we obtain the number of edges of $\Gamma_n$ by using Fibonacci numbers:
\[
H_n^{(h)} = \sum_{i=1}^n F_{i} F_{n-i+1}\,.
\]
The latter result is \cite[Proposition 3]{mediannature}.
\end{remark}

\section{The independent subsets of powers of cycles}\label{sec:ind sub of cycles}

For $n, h, k\geq 0$, we denote by $q_{n,k}^{(h)}$ the number of  independent
$k$-subsets of $\cy{n}{h}$.
\begin{remark}
For $h=1$, $n>1$, $q_{n,k}^{(h)}$ counts the number of binary strings $\alpha\in \Lambda_n$
such that $H(\alpha,00\cdots0)=k$.
\end{remark}
\begin{lemma}\label{lem:q_nk formula}
For $n, h\geq 0$, and $k>1$,
\[
q_{n,k}^{(h)} = \frac{n}{k}\binom{n-hk-1}{k-1}\ .
\]
Moreover, $q_{n,0}^{(h)}=1$, and $q_{n,1}^{(h)}=n$, for each $n, h\geq 0$.
\end{lemma}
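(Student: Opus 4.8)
The plan is to count independent $k$-subsets of the cycle $\cy{n}{h}$ by reducing to the path case, which is already solved by Lemma~\ref{lem:p_nk formula}. The idea is that a circular binary string of length $n$ representing an independent subset differs from a linear one only in the ``wrap-around'' constraint: the $1$'s must be separated by more than $h$ positions, where now the gap from the last $1$ back to the first $1$ (going around the cycle) must also exceed $h$. The special cases $k=0$ and $k=1$ are immediate: the empty set is the unique independent $0$-subset, and every singleton $\{v_i\}$ is independent, giving $q_{n,0}^{(h)}=1$ and $q_{n,1}^{(h)}=n$. So the substance is the formula for $k>1$.

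For $k>1$, the approach I would take is a \emph{transfer/rooting} argument exploiting the cyclic symmetry. First I would fix one of the $k$ chosen vertices, say by summing over which vertex plays a distinguished role, and then ``cut'' the cycle at that vertex to unroll it into a path-like configuration. Concretely, I would count pairs consisting of an independent $k$-subset $S$ of $\cy{n}{h}$ together with a marked element $v_i\in S$. On one hand this count is $k\,q_{n,k}^{(h)}$, since each $k$-subset is counted once for each of its $k$ elements. On the other hand, by the rotational symmetry of the cycle, the number of independent $k$-subsets containing a \emph{fixed} vertex $v_i$ is independent of $i$, so the count equals $n$ times the number of independent $k$-subsets containing a particular vertex. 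This gives
\[
k\,q_{n,k}^{(h)} = n\cdot\bigl(\text{number of independent } k\text{-subsets of } \cy{n}{h} \text{ containing } v_1\bigr)\,.
\]
The remaining task is to show that the number of independent $k$-subsets of $\cy{n}{h}$ containing a fixed vertex equals $\binom{n-hk-1}{k-1}$.

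To evaluate that number I would reason directly on the circular string. Once $v_1$ is required to be in $S$, none of the $2h$ cyclically adjacent vertices $v_{n-h+1},\dots,v_n,v_2,\dots,v_{h+1}$ may belong to $S$; the remaining $k-1$ chosen vertices lie on the arc strictly between the forbidden neighborhoods and must be pairwise separated by more than $h$, with the cyclic gap constraints to $v_1$ already ensured. This is now a \emph{linear} problem of placing $k-1$ nonadjacent (in the $h$-power sense) vertices into a segment of the appropriate length, which is exactly the kind of gap-counting bijection used in Lemma~\ref{lem:p_nk formula}: choosing $k-1$ positions with all gaps exceeding $h$ among the available slots is equinumerous with choosing $k-1$ unrestricted positions from a set whose size has been reduced by $h(k-1)$. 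Carrying out the bookkeeping, the number of available slots after removing $v_1$ and its $2h$ forbidden neighbors is $n-2h-1$, and the gap reduction subtracts $h(k-2)$ for the $k-2$ internal gaps among the $k-1$ placed vertices, yielding $\binom{(n-2h-1)-h(k-2)}{k-1}=\binom{n-hk-1}{k-1}$. Substituting back gives $q_{n,k}^{(h)}=\frac{n}{k}\binom{n-hk-1}{k-1}$, as claimed.

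The main obstacle I anticipate is the precise off-by-one accounting in the linear counting step: correctly identifying how many vertices are genuinely forbidden by the cyclic adjacency to $v_1$, and how the ``gap exceeds $h$'' constraint translates into a reduction of the binomial's upper index. The symmetry argument and the pair-counting identity are robust, but the exact length of the usable arc and the number of internal gaps must be tracked carefully to land on $n-hk-1$ rather than a neighboring value; verifying the formula against small cases (for instance $\cy{n}{1}$, where it should reproduce the Lucas-cube level counts $\frac{n}{k}\binom{n-k-1}{k-1}$) would be the natural sanity check before committing to the general bijection.
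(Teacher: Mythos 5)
Your proposal is correct and takes essentially the same approach as the paper: both arguments fix a distinguished vertex, note that its $2h$ cyclic neighbors are forbidden so that the remaining $k-1$ choices form an independent $(k-1)$-subset of a path on $n-2h-1$ vertices, evaluate that count as $p_{n-2h-1,k-1}^{(h)}=\binom{n-hk-1}{k-1}$, and divide the $n$-fold rotationally symmetric count by the overcounting factor $k$. The only cosmetic difference is that the paper cites Lemma \ref{lem:p_nk formula} directly for the path count while you re-derive the gap bookkeeping by hand.
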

\begin{proof}
Fix an element $v_i \in V(\cy{n}{h})$, and let $n>2h$.
Any independent subset of $\cy{n}{h}$
containing $v_i$ does not contain the $h$ elements preceding  $v_i$ and the $h$
elements following $v_i$. Thus, the number of
independent $k$-subsets of $\cy{n}{h}$ containing $v_i$ equals
\[
p_{n-2h-1,k-1}^{(h)}=\binom{n-hk-1}{k-1}\ .
\]
The total number of independent $k$-subsets of $\cy{n}{h}$ is obtained
by multiplying $p_{n-2h-1,k-1}^{(h)}$ by $n$, then dividing it by $k$ (each
subset is counted $k$ times by the previous proceeding).
The case $n\leq 2h$, as well as the cases $k=0,1$, can be easily verified.
\end{proof}

For $n,h\geq0$, the number of  all independent subsets of $\cy{n}{h}$ is
\begin{equation}\label{eq:q_n def}
q_n^{(h)}=\sum_{k\geq0}q_{n,k}^{(h)}=\sum_{k=0}^{\lceil n/(h+1)\rceil}q_{n,k}^{(h)}\ ,
\end{equation}
\begin{remark} Denote by $L_n$ the $n^{th}$ element of the Lucas sequence $L_1=1$, $L_2=3$, and $L_i=L_{i-1}+L_{i-2}$, for $i>2$. Then, for $n>1$, $q_n^{(1)}=L_{n}$ is the number of elements of the Lucas cube of order $n$.
\end{remark}

The coefficients $q_n^{(h)}$ satisfy a recursion that closely resemble that of Lemma 2.2.
\begin{lemma}
\label{lem:q_n recurrence}
For $n, h \geq 0$,
\begin{equation}\label{eq:q_n recurrence}
q_n^{(h)} = \begin{cases}
n+1 & \text{if }\ n\leq 2h+1\,, \\
q_{n-1}^{(h)}+q_{n-h-1}^{(h)} & \text{if }\ n>2h+1.
\end{cases}
\end{equation}
\end{lemma}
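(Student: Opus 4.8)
The plan is to treat the two branches of the recurrence separately: the base range $n\le 2h+1$ by an explicit description of the independent subsets, and the range $n>2h+1$ by first refining the count by cardinality and then summing.

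For $n\le 2h+1$ I would start from the observation that two distinct vertices $v_i,v_j$ with $i<j$ are \emph{non}-adjacent in $\cy nh$ exactly when $h+1\le j-i\le n-h-1$, and that this interval of admissible gaps is empty precisely when $n\le 2h+1$. Thus, in this range every pair of distinct vertices is adjacent, so the only independent subsets are the empty set and the $n$ singletons, giving $q_n^{(h)}=n+1$.

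For $n>2h+1$, the tempting move is to mimic the proof of Lemma \ref{lem:p_n recurrence} and split the independent subsets according to whether they contain $v_n$. I expect this to be the main obstacle: in contrast with the path, deleting $v_n$ from $\cy nh$ does not produce $\cy{n-1}h$ once $h\ge 2$, because the wraparound edges joining $\{v_1,\dots,v_{h}\}$ to $\{v_{n-h+1},\dots,v_{n-1}\}$ survive, so the graph induced on $v_1,\dots,v_{n-1}$ is not a smaller power of a cycle. To bypass this I would instead establish the cardinality-wise identity
\[
q_{n,k}^{(h)}=q_{n-1,k}^{(h)}+q_{n-h-1,k-1}^{(h)}\qquad(k\ge 0),
\]
with the convention $q_{\cdot,-1}^{(h)}=0$, and then sum over $k$: the first term contributes $q_{n-1}^{(h)}$, while the second, after reindexing $k\mapsto k-1$, contributes $q_{n-h-1}^{(h)}$, so Equation (\ref{eq:q_n def}) yields $q_n^{(h)}=q_{n-1}^{(h)}+q_{n-h-1}^{(h)}$.

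Finally I would prove the refined identity directly from Lemma \ref{lem:q_nk formula}. The cases $k=0,1$ are immediate from $q_{n,0}^{(h)}=1$ and $q_{n,1}^{(h)}=n$. For $k\ge 2$, I would write $q_{n,k}^{(h)}=\frac nk\binom{n-hk-1}{k-1}$, expand by Pascal's rule as $\binom{n-hk-1}{k-1}=\binom{n-hk-2}{k-1}+\binom{n-hk-2}{k-2}$, and compare with $q_{n-1,k}^{(h)}=\frac{n-1}{k}\binom{n-hk-2}{k-1}$ and $q_{n-h-1,k-1}^{(h)}=\frac{n-h-1}{k-1}\binom{n-hk-2}{k-2}$; using the ratio $\binom{n-hk-2}{k-1}/\binom{n-hk-2}{k-2}=(n-k(h+1))/(k-1)$, the whole thing collapses to the trivial equality $k(h+1)=kh+k$. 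The only care needed is range bookkeeping: for $n>2h+1$ one has $n-1\ge 2h+1$ and $n-h-1\ge h+1\ge 1$, so all three cycle-powers are defined and the closed form of Lemma \ref{lem:q_nk formula} applies to each term (terms with $k$ exceeding the maximal admissible cardinality vanishing on both sides).
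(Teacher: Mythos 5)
Your proof is correct, but it takes a genuinely different route from the paper's. The paper argues by a decomposition of the family of independent subsets of $\cy{n}{h}$: it defines $\mathcal{I}_{in}$ as the subsets avoiding $v_n$ and avoiding each of the $h$ pairs $(v_1,v_{n-h}),\dots,(v_h,v_{n-1})$ (exactly the pairs that become edges in $\cy{n-1}{h}$), identifies $\mathcal{I}_{in}$ with the independent subsets of $\cy{n-1}{h}$, and then shows that the complementary family $\mathcal{I}_{out}$ has size $q_{n-h-1}^{(h)}$ via the identity $p_{n-2h-1,k-1}^{(h)}+hp_{n-3h-2,k-2}^{(h)}=q_{n-h-1,k-1}^{(h)}$; this forces a case split ($n>3h+2$ versus $2h+1<n\leq 3h+2$). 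You bypass the decomposition entirely: you verify the cardinality-refined recurrence $q_{n,k}^{(h)}=q_{n-1,k}^{(h)}+q_{n-h-1,k-1}^{(h)}$ purely algebraically from the closed form of Lemma \ref{lem:q_nk formula} (Pascal's rule plus the absorption identity), and sum over $k$. Since the paper's treatment of $\mathcal{I}_{out}$ also leans on the closed formulas of Lemmas \ref{lem:p_nk formula} and \ref{lem:q_nk formula}, your argument is not less self-contained; it is shorter, uniform over the whole range $n>2h+1$, yields the refined recurrence as a bonus, and your base case (for $n\leq 2h+1$ the interval $[h+1,n-h-1]$ of admissible gaps is empty, so every pair of vertices is adjacent) is more explicit than the paper's ``can be easily checked''. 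What you give up is the structural insight into which independent subsets correspond to $\cy{n-1}{h}$ and which to $\cy{n-h-1}{h}$. Two small points of care. First, the algebra tacitly uses the combinatorial convention under which the closed formula returns $0$ for infeasible $k$ (binomial coefficients with negative upper entry, or with lower entry exceeding the upper, vanish); with that convention the degenerate terms do vanish on both sides, and the only delicate edge case, $k=2$ with $n-hk-2<0$, cannot occur when $n>2h+1$. Second, your motivational remark about why the path-style split fails is slightly off: the surviving wraparound edges are also edges of $\cy{n-1}{h}$; the actual defect is that the graph induced on $v_1,\dots,v_{n-1}$ is \emph{missing} the $h$ edges $(v_i,v_{i+n-1-h})$, $1\leq i\leq h$, which is precisely why the paper excludes those pairs in defining $\mathcal{I}_{in}$. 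This inaccuracy is confined to the motivation and does not affect your proof.
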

\begin{proof}
The case $n\leq 2h+1$ can be easily checked. Let $n> 3h+2$, and let $\mathcal{I}$ be the set of the independent subsets of $\cy{n}{h}$. Let $\mathcal{I}_{in}$ be
the subset of these subsets that (i) do not contain $v_n$, and that  (ii) contain
no one of the following pairs: $(v_1, v_{n-h}), (v_2, v_{n-h+1}), \dots, (v_h, v_{n-1})$. Furthermore
let $\mathcal{I}_{out}$ be the subset of the remaining independent subsets of $\cy{n}{h}$.

\smallskip
It is easy to see that the elements of $\mathcal{I}_{in}$ are exactly the independent subsets
of $\cy{n-1}{h}$. Indeed, $v_n$ is not a vertex of $\cy{n-1}{h}$ and the vertices
of pairs $(v_1, v_{n-h})$, $(v_2, v_{n-h+1})$, $\dots$, $(v_h, v_{n-1})$ are connected in $\cy{n-1}{h}$.
On the other hand, to show that
\[
|\mathcal{I}_{out}|=q_{n-h-1}^{(h)}
\]
we argue as follows. First we recall (see the proof of Lemma \ref{lem:q_nk formula})
that the number of independent $k$-subsets of
$\cy{n}{h}$ that contain $v_n$ is $p_{n-2h-1,k-1}^{(h)}$.
Secondly we obtain that the number of independent $k$-subsets
of $\cy{n}{h}$ containing one of the pairs $(v_1, v_{n-h})$, $(v_2, v_{n-h+1})$, $\dots$,
$(v_h, v_{n-1})$ is $h p_{n-3h-2,k-2}^{(h)}$. To see this, consider the pair $(v_1,v_{n-h})$. The independent
subsets containing such a pair do not contain the $h$ vertices from $v_n-h+1$ to $v_n$, do
not contain the $h$ vertices from $v_2$ to $v_{h+1}$, and do not contain the $h$ vertices from
$v_{n-2h}$ to $v_{n-h-1}$. Thus, the removal of such vertices and of the vertices $v_1$ and $v_{n-h}$
turns $\cy{n}{h}$ into $\pa{n-3h-2}{h}$. Hence we can obtain all the independent $k$-subsets of
$\cy{n}{h}$ that contain the pair $(v_1,v_{n-h})$ by simply adding these two vertices to one of the
$p_{n-3h-2,k-2}^{(h)}$ independent $k-2$-subsets of $\pa{n-3h-2}{h}$. Same reasoning
can be carried out for any other one of the pairs: $(v_2, v_{n-h+1})$, $\dots$,
$(v_h, v_{n-1})$.

\smallskip
Using Lemmas \ref{lem:p_nk formula} and \ref{lem:q_nk formula} one can easily derive
that
\[
p_{n-2h-1,k-1}^{(h)}+hp_{n-3h-2,k-2}^{(h)}=q_{n-h-1,k-1}^{(h)}\,.
\]
Hence, we derive the size of $\mathcal{I}_{out}$:
\[
|\mathcal{I}_{out}|=q_{n-h-1}^{(h)}=\sum_{k\geq 1}p_{n-2h-1,k-1}^{(h)}\ + h\sum_{k\geq2}p_{n-3h-2,k-2}^{(h)}\,.
\]
Summing up we have shown that $|\mathcal{I}|=|\mathcal{I}_{in}|+|\mathcal{I}_{out}|$, that is
\[
q_n^{(h)}=q_{n-1}^{(h)}+q_{n-h-1}^{(h)}\,.
\]

The proof of the case $2h+1< n \leq 3h+2$
is obtained in a similar way, observing that $|\mathcal{I}_{out}|=n-h$, and that
$n-h-1\leq 2h+1$.
\end{proof}

\section{The poset of independent subsets of powers of cycles}
\label{sec:hasse of cycles}

Figure \ref{fig:Q_3-4_h} shows a few Hasse diagrams $\hc{n}{h}$. Notice that,
as stated in the introduction, for each $n$, $\hc{n}{1}$ is the Lucas cube $\Lambda_n$.

\begin{figure}[h!]
  \centerline{\includegraphics[scale=0.6]{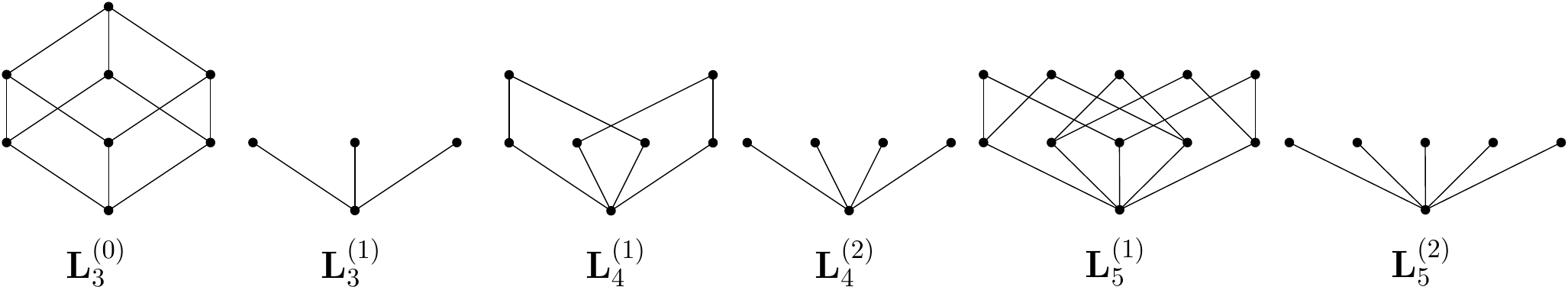}}
  \caption[Some $\hc{n}{h}$]
    {Some $\hc{n}{h}$.}
  \label{fig:Q_3-4_h}
\end{figure}

Let $M_n^{(h)}$ be the number of edges of $\hc{n}{h}$.
As done in Section \ref{sec:hasse of paths} for the case of paths, we immediately provide
a formula for $M_n^{(h)}$:
\begin{equation}\label{eq:M_nh formula}
M_n^{(h)}= \sum_{k=0}^{\lceil n/h+1 \rceil}kq_{n,k}^{(h)} = n \sum_{k=0}^{\lceil n/h+1 \rceil}{\binom{n-hk-1}{k-1}}\ .
\end{equation}
\begin{remark}
For $h=1$, $n>1$, $M_n^{(h)}$ counts the number of edges of $\Lambda_n$.
As shown in \cite[Proposition 4(ii)]{lucas}, $M_n^{(h)}=n F_{n-1}$.
\end{remark}
\smallskip
As shown in the proof of Lemma \ref{lem:q_nk formula}, the value
\[
p_{n-2h-1,k-1}^{(h)}=\binom{n-hk-1}{k-1}
\]
is the analogue of the coefficient $T_{k,i}^{(n,h)}$:  in the case of cycles we have no dependencies on $i$,
because each choice of vertex is equivalent.
We can obtain $M_n^{(h)}$ in terms of a fibonacci-like sequence, as follows.
\begin{proposition}
For $n>h\geq 0$, the following holds.
\[
M_n^{(h)} = n F_{n-h}^{(h)}\,.
\]
\end{proposition}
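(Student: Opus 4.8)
The plan is to mirror the proof of Theorem \ref{th:main}, replacing the vertex-by-vertex bookkeeping of the path case by a single symmetry argument that is available for cycles. The starting point is Equation (\ref{eq:M_nh formula}), which already expresses $M_n^{(h)}$ as $\sum_{k\ge 1} k\, q_{n,k}^{(h)}$. I read this sum through double counting: since each nonempty independent $k$-subset is counted once for each of its $k$ elements, $\sum_{k\ge 1} k\, q_{n,k}^{(h)}$ equals $\sum_{i=1}^n N_i$, where $N_i$ denotes the number of (necessarily nonempty) independent subsets of $\cy{n}{h}$ that contain the vertex $v_i$. This is exactly the cycle analogue of Lemma \ref{lem:somma tabelle T}.

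First I would invoke the rotational symmetry of $\cy{n}{h}$: the cyclic shift $v_i \mapsto v_{i+1}$ (indices taken mod $n$) is a graph automorphism acting transitively on vertices, so it maps the independent subsets containing $v_i$ bijectively onto those containing $v_{i+1}$. Hence all the $N_i$ coincide with a common value $N$, and $M_n^{(h)} = nN$. Next I would evaluate $N$. The proof of Lemma \ref{lem:q_nk formula} already shows that, for $n>2h$, deleting $v_i$ together with the $h$ vertices preceding and the $h$ vertices following it turns $\cy{n}{h}$ into $\pa{n-2h-1}{h}$, so that the independent $k$-subsets containing $v_i$ are in bijection, via removal of $v_i$, with the independent $(k-1)$-subsets of $\pa{n-2h-1}{h}$, numbering $p_{n-2h-1,k-1}^{(h)}$. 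Summing over $k\ge 1$ gives $N = \sum_{k\ge1} p_{n-2h-1,k-1}^{(h)} = \bar{p}_{n-2h-1}^{(h)}$. Finally, applying Equation (\ref{eq:F and p}) with index $i = n-h$ (legitimate because $n>h$ forces $n-h\ge 1$) yields $F_{n-h}^{(h)} = \bar{p}_{(n-h)-h-1}^{(h)} = \bar{p}_{n-2h-1}^{(h)} = N$, whence $M_n^{(h)} = nF_{n-h}^{(h)}$.

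The step requiring the most care is the degenerate range $h < n \le 2h+1$, where $\pa{n-2h-1}{h}$ has nonpositive length and the deletion argument above does not literally apply. Here I would argue directly: for $n\le 2h+1$ every pair of distinct vertices of $\cy{n}{h}$ is adjacent, so the graph is complete, the only independent subset containing $v_i$ is the singleton $\{v_i\}$, and thus $N=1$. This matches the other side of the identity, since the $\bar{p}$-convention gives $\bar{p}_{n-2h-1}^{(h)} = p_0^{(h)} = 1$ (as $n-2h-1\le 0$), and correspondingly $F_{n-h}^{(h)} = 1$ because $n-h \le h+1$. With this boundary case dispatched, the symmetry-plus-double-counting argument closes the proof uniformly for all $n>h$.
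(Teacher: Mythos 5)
Your proof is correct and takes essentially the same route as the paper: both read $M_n^{(h)}=\sum_{k\geq 1}k\,q_{n,k}^{(h)}$ as $n$ times the number of independent subsets containing a fixed vertex (using the vertex-transitivity of $\cy{n}{h}$ and the deletion argument from the proof of Lemma \ref{lem:q_nk formula}), obtaining $n\,\bar{p}_{n-2h-1}^{(h)}$, and then conclude by Equation (\ref{eq:F and p}). The only difference is presentational: you spell out the double counting, the symmetry step, and the degenerate range $h<n\leq 2h+1$, all of which the paper compresses into Equation (\ref{eq:M_nh formula}) and the $\bar{p}$ convention.
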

\begin{proof}
Using Equation (\ref{eq:F and p}) we obtain:
\[
M_n^{(h)} = n\sum_{k=1}^{\lceil n/(h+1)\rceil} \bar{p}_{n-2h-1,k-1}^{(h)} =
n \bar{p}_{n-2h-1}^{(h)}= n F_{n-h}^{(h)}\,.
\]
\end{proof}

\providecommand{\bysame}{\leavevmode\hbox to3em{\hrulefill}\thinspace}
\providecommand{\MR}{\relax\ifhmode\unskip\space\fi MR }
\providecommand{\MRhref}[2]{%
  \href{http://www.ams.org/mathscinet-getitem?mr=#1}{#2}
}
\providecommand{\href}[2]{#2}

\end{document}